\newcommand{\remove}[1]{}
\newtheorem{observation}{Observation}
\begin{document}

\title{Maximum-width Axis-Parallel Empty Rectangular Annulus} 
\author{Arpita Baral\inst{1},
Abhilash Gondane\inst{2},
Sanjib Sadhu\inst{2},
Priya Ranjan Sinha Mahapatra\inst{1},
\institute{University of Kalyani, India \\
\and
National Institute of Technology, Durgapur, India \\
}
}
\maketitle

\begin{abstract}
Given a set $P$ of $n$ points on $\mathbb R^{2}$, we address the problem of computing an axis-parallel empty rectangular  
annulus $A$ of maximum-width such that no point of $P$ lies inside $A$ but all points of $P$ must lie 
inside, outside and on the boundaries of two parallel rectangles forming the annulus $A$. 
We propose an $O(n^3)$ time and $O(n)$ space algorithm to solve the problem. In a particular case when the inner rectangle 
of an axis-parallel empty rectangular annulus reduces to an input  point we can solve the problem in $O(n \log n)$ time and $O(n)$ space.
\end{abstract}
\section{Introduction}
A set of $n$ points $P$ on $\mathbb R^{2}$ is said to be {\sf enclosed} by a geometric (or enclosing) object $C$ if 
all points of $P$ must lie inside $C$ and on the boundary of $C$. The problem of enclosing the input point set $P$ 
using a {\sf minimum sized} geometric object $C$ such as a circle~\cite{ps-cgi-90}, rectangle~\cite{tou-sgprc-83}, 
triangle~\cite{oamb-oafmet-86}, circular annulus~\cite{w-nmpp-86,rz-epccmrsare-92,efnn-rauvd-89,ast-apsgo-94,as-eago-98}, 
rectilinear annulus~\cite{ght-oafepranw-09}, rectangular annulus~\cite{jmkd-mwra-2012} etc has been extensively studied 
in computational geometry over the last few decades. Here we start the discussion with the enclosing problem that uses various annulus 
as an enclosing object. Among various types of annulus, the enclosing problem using circular annulus has been studied 
extensively~\cite{w-nmpp-86,rz-epccmrsare-92,efnn-rauvd-89,ast-apsgo-94,as-eago-98}. The objective of this problem is 
to find a circular annulus of minimum-width that encloses $P$. Here circular annulus region is formed by two concentric 
circles. Gluchshenkoa et al.~\cite{ght-oafepranw-09} considered 
the problem of finding a rectilinear annulus of minimum-width which encloses $P$. For this problem, the annulus region 
is formed by two concentric axis-parallel squares. Recently, Bae~\cite{bae-cmwea-2017} studied this square
annulus problem in arbitrary orientation where annulus is the open region between two concentric squares. 
Mukherjee et al.~\cite{jmkd-mwra-2012} considered the problem of 
identifying a rectangular annulus of minimum-width which encloses $P$. In this problem, it is interesting to note 
that two mutually parallel rectangles forming the annulus region, are not necessary to be {\sf concentric}. 
Moreover the 
orientation of such rectangles is not restricted to be axis-parallel. Further details on various annulus problem can be found
in~\cite{bae-cmwea-2017,bg-opapp-14,bbdg-opap-98,bbbrw-ccmwaps-98,AHIMPR-03,DuncanGR-97} and the references therein.\\

As per we are aware, there has been little work on
finding empty annulus of maximum-width for an input point set $P$. D{\'{\i}}az{-}B{\'{a}}{\~{n}}ez et al.~\cite{dhmrs-leap-03} 
first 
studied the problem of finding an empty circular annulus of maximum-width and proposed $O(n^3\log n)$ time and $O(n)$ space 
algorithm to solve it. Mahapatra~\cite{mahapatra-larea-2012} considered the problem of identifying an axis-parallel empty 
rectangular annulus of maximum-width for the point set $P$ and proposed an {\sf incorrect} $O(n^2)$ time algorithm to solve 
it. Given a point set $P$, note that, for an axis-parallel minimum-width rectangular annulus which encloses $P$, the outer 
or larger rectangle is always the minimum enclosing rectangle of $P$~\cite{jmkd-mwra-2012}. This observation leads to 
develop an $O(n)$ time algorithm to find an axis-parallel rectangular annulus of minimum-width which encloses
$P$~\cite{jmkd-mwra-2012}. However for the empty axis-parallel rectangular annulus problem of maximum-width, the number of 
potential outer rectangles forming an empty rectangular annulus is $O(n^4)$. This implies that $O(n^5)$ algorithm can be
developed to solve this empty annulus problem using the result in
~\cite{jmkd-mwra-2012}. Here we propose an $O(n^3)$ time and $O(n)$ space algorithm for finding an axis-parallel
empty rectangular annulus of maximum-width for a given point set $P$. Note that the problem of axis-parallel empty 
rectangular annulus of maximum-width is equivalent to the problem when the empty annulus region is generated by two 
concentric rectangles.
% Note that in the problem of axis-parallel rectangular annulus of maximum-width, the empty annulus region can also be 
% generated by two concentric rectangles.
% However the solution of such empty annulus problem does not completely solve the problem considered here.

The paper is organized as follows: In Section~\ref{problemdefandter} we discuss the problem of identifying
an axis-parallel empty rectangular annulus of maximum-width after introducing some notations. In Section~\ref{proposedalgorithm}
we describe our new algorithm and prove its correctness. Section~\ref{conclusion} concludes the paper.

\section{Problem definition and terminologies}\label{problemdefandter}
We begin by introducing some notations. Let $P = \{p_1, \ldots, p_n\}$ be a set of $n$ points on $\mathbb R^{2}$. 
Let the $x$ and $y$-coordinate of a point $p_i$ be denoted as $x(i)$ and $y(i)$ respectively.
Two axis-parallel rectangles $R$ and $R'$ are said to be parallel to each other when one of the sides of rectangle $R$
is parallel to a side of $R'$. 
% The boundary, interior of an axis parallel
% rectangle $R$ is denoted as $bd(R)$ and
% $intr(R)$ respectively.
Let $R_{in}$ and $R_{out}$ be two axis parallel rectangles such that $R_{in} \subset R_{out}$.
The rectangular annulus $A$ formed by two such axis-parallel rectangles $R_{in}$ and $R_{out}$ is the
{\sf open region} between $R_{in}$ and $R_{out}$ where $R_{in}$ has non-zero area. See Fig. \ref{fig1:Cases} 
for a demonstration. 
We use the term inner (resp. outer) for the 
smaller (resp. larger) rectangle of rectangular annulus $A$. In this paper the rectangle will always imply
an axis parallel rectangle. The {\sf top-width} of the rectangular annulus $A$ is 
the perpendicular distance between top sides of its inner and outer rectangles. Similarly we define the  
{\sf bottom-width}, {\sf right-width} and {\sf left-width} of $A$. 
The minimum width among the top-, right-, bottom- and left-widths of a rectangular annulus $A$ is defined as the {\sf width}
of $A$ and is denoted 
by $W(A)$. A rectangular annulus $A$ formed by rectangles $R_{in}$ and $R_{out}$ (
$R_{in} \subset R_{out}$) is said to be {\sf empty} if the following two conditions are satisfied.
\begin{itemize}
 \item [$(i)$] No points of $P$ lie inside $A$.
 \item [$(ii)$] All points of $P$ lie inside the rectangle $R_{in}$ and outside the rectangle $R_{out}$. The input points 
 may lie on the boundaries of both $R_{in}$ and $R_{out}$.
\end{itemize} 
The objective of our problem is to compute an axis-parallel empty rectangular annulus of {\sf maximum-width}
from the given point set $P$. Note that the solution of this problem is not unique.
From now onwards the term annulus is used to mean an axis-parallel empty rectangular annulus.

\section{Proposed Algorithm}\label{proposedalgorithm}
An annulus is defined by its eight edges (Four edges of outer rectangle and four edges of inner rectangle). Each edge of an
annulus passes through a point $p \in P$. See Fig. \ref{fig1:Cases} as an illustration. \\

\begin{figure}[]
 \centering
 \includegraphics[scale=0.5]{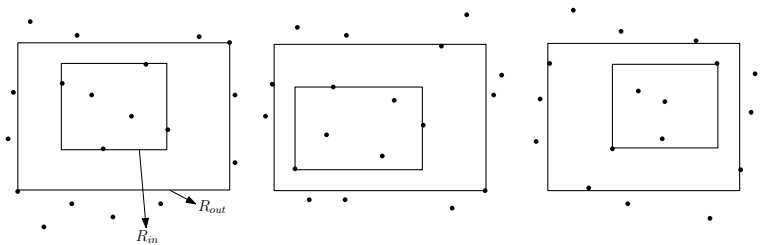}
 \caption{\small{Different configurations of empty annulus.}}  
 \label{fig1:Cases}
 \end{figure}

Initially sort $n$ points of $P$ in ascending order on the basis of their x-coordinates and in descending order on the 
basis of their y-coordinates. Throughout the paper we have assumed that all points are in general position i.e. no 
horizontal or vertical line pass through two points. Two horizontal
lines $Top_{out}$ and $Bot_{out}$ sweeps vertically from top to bottom over the plane and these two lines denote the current 
positions of the top and bottom sides of the outer rectangle defining an empty annulus. Depending on the position of 
$Top_{out}$ and $Bot_{out}$ a horizontal strip is defined as follows.\\

\noindent {\bfseries Definition 1}  A horizontal strip $S(a,b)$ is defined as the open region bounded by two parallel
lines $Top_{out}$ and $Bot_{out}$ where the lines $Top_{out}$, $Bot_{out}$ pass through the 
points $a$ and $b$ respectively, having $y(a) > y(b)$ and $a, b \in P$.\\

\noindent {\bfseries Definition 2} $E(a,b)$ is the set of all empty annuli in $S(a,b)$ such that the top ($Top_{out}$) 
and bottom ($Bot_{out}$) edges of outer rectangle of any annulus $A \in E(a,b)$ pass through the points 
$a$ and $b$ respectively.\\

We now state the following simple observation. 

\begin{observation}\cite{jmkd-mwra-2012}\label{obs1}
Given an outer rectangle $R_{out}$ generated from the point set $P$ on $\mathbb R^{2}$, the empty annulus $A$ having $R_{out}$ as 
the outer rectangle can be computed in $O(m)$ time, where $m$ is the number of points inside $R_{out}$.
\end{observation}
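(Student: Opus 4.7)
The plan is to show that once $R_{out}$ is fixed, the inner rectangle of the desired empty annulus is forced to be the axis-parallel bounding box of the $m$ points lying strictly inside $R_{out}$. First, I would observe that any such interior point must lie inside $R_{in}$, since otherwise it would sit in the open annular region and violate emptiness. Thus every candidate $R_{in}$ must contain all $m$ interior points, i.e.\ it must contain their bounding box $B$. Set $R_{in} := B$ and output the resulting annulus $A$.

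Next, I would verify optimality in terms of the width $W(A)$. If $R_{in}$ contains $B$, then each of its four sides is at least as far out as the corresponding side of $B$; consequently the top-, bottom-, left-, and right-widths of the annulus with inner rectangle $R_{in}$ are each no larger than those obtained with $R_{in}=B$. Taking the minimum of the four, $W$ is maximized precisely when $R_{in}=B$, so the bounding-box choice is the unique optimum (under the general-position assumption stated earlier). I would also note that the construction trivially satisfies the second emptiness condition: all interior points lie inside $R_{in}$, and the points outside $R_{out}$ remain outside it.

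Finally, for the running time, computing $B$ only requires one pass over the $m$ interior points to maintain the minimum and maximum $x$- and $y$-coordinates, giving $O(m)$ time and $O(1)$ extra space. The resulting annulus is then described by the eight edge coordinates of $R_{in}$ and $R_{out}$, so the reporting cost is $O(1)$. The main (mild) subtlety is ensuring the degenerate case $m \le 1$, where $B$ has zero area, is handled consistently with the requirement in Section~\ref{problemdefandter} that $R_{in}$ have non-zero area; this is not a real obstacle since such an $R_{out}$ simply produces no valid annulus and can be discarded in $O(m)$ time.
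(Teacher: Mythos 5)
Your proposal is correct and is essentially the argument the paper leaves implicit by citing Mukherjee et al.: since every point strictly inside $R_{out}$ must lie in (or on) $R_{in}$, taking $R_{in}$ to be the bounding box of those $m$ points maximizes all four side-widths simultaneously, and that box is computed in one $O(m)$ pass. The only nitpick is your claim that this optimum is \emph{unique}: the width is the minimum of the four side-widths, so other inner rectangles containing the bounding box can achieve the same width; this does not affect correctness or the $O(m)$ bound.
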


Our proposed algorithm computes an empty annulus $A^{max}_{(a,b)}$ of maximum-width within each strip $S(a,b)$, for all
such possible pairs $(a,b)$, where $a,b \in P$. Finally an annulus of maximum-width among all those annuli ($A^{max}_{(a,b)}$)
is reported.\\

\subsection{Finding an empty annulus of maximum-width in a horizontal strip}\label{finding_cand}
Consider a strip $S(a,b)$ where we are looking for an empty annulus of maximum-width from $E(a,b)$. The following approach 
presents the way to achieve our goal.\\

\begin{figure}[]
 \centering
 \includegraphics[scale=0.5]{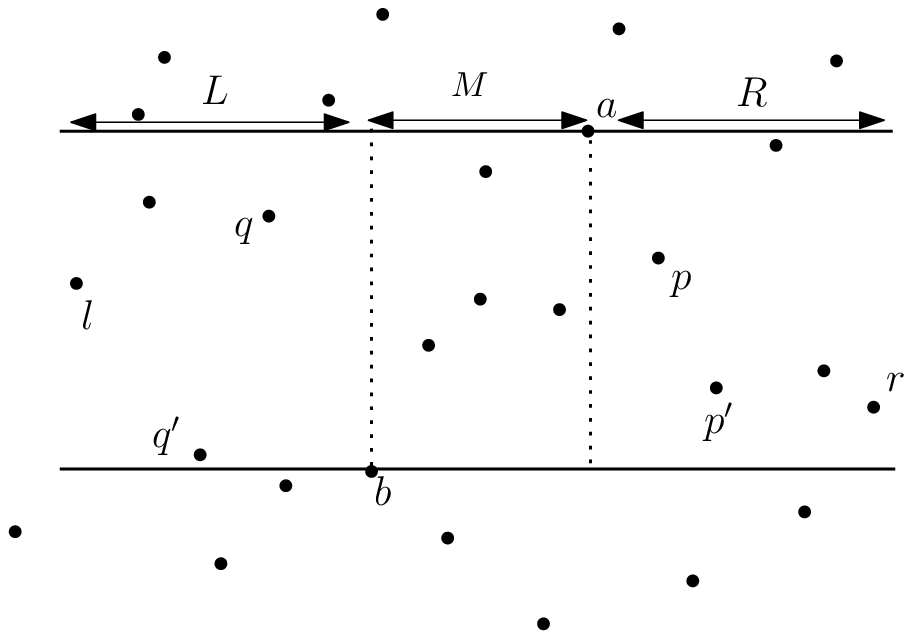}
 \caption{\small{Demonstration shows partition of Q into sets L, M, R.}}
 \label{fig:points_strip}
 \end{figure} 

Let $Q$ be the set of ordered points in increasing order (w.r.t. value of x-coordinates) including points $a$ and $b$ in 
the strip $S(a,b)$. Note that $Q$ can be determined from ordered set of points $P$ in linear time. Also let the leftmost 
and rightmost points of $Q$ be $l$ and $r$ respectively. 
Without loss of generality we have assumed that $x(a) > x(b)$ in $S(a,b)$.
To compute the elements in $E(a,b)$, we use two vertical segments $Left_{out}$ and $Right_{out}$. These two lines define 
the left and right edges of outer rectangle $R_{out}$ of an annulus $A \in E(a,b)$. 
It can be observed that $Left_{out}$ is required to sweep over the points of $Q$ on the left of $b$ and $Right_{out}$ is 
required to sweep over the points of $Q$ on the right of $a$ to generate elements of $E(a,b)$.
If any one of these segments moves to a point of $Q$ an annulus defined by the current positions of $Left_{out}$
and $Right_{out}$ is required to update and therefore the points of $Q$ are the {\sf event} points of the proposed sweep 
line algorithm. We now partition $Q$ into $3$ 
groups - $(i)$ Points starting from $x(l)$ to $x(b)$ are in set $L$, 
$(ii)$ Points inside the rectangle formed by corner points $x(a)$ and $x(b)$ in set $M$ and $(iii)$ Points starting from 
$x(a)$ to $x(r)$ in set $R$ (See Fig. \ref{fig:points_strip}). $Left_{out}$ starts sweeping from $x(b)$ and moves 
towards $x(l)$ where the event points of $Left_{out}$ are the input points in $L$. Similarly $Right_{out}$ starts sweeping
from $x(a)$ and moves towards $x(r)$ and its event points are input points in $R$.\\

Let $p$ and $q$ denote the immediate right point of $a$ and immediate left point of $b$ respectively. Also let $p'$ and
$q'$ denote the immediate right and immediate left points of $p$ and $q$ respectively. See Fig. \ref{fig:points_strip}.\\

Depending on the cardinality of $M$ we have the following three cases - $(i)$ $|M| \geq 2$, $(ii)$ $|M| = 0$, and
$(iii)$ $|M| = 1$.\\

\noindent {\bfseries Case I ($|M| \geq 2$):} We first take an initial annulus $A$ in $S(a,b)$ from which we generate other
annuli in the strip. This annulus $A$ has outer rectangle say $R_{out}$. The top-right corner and bottom-left corner of 
$R_{out}$ are at points $a$ and $b$ respectively. Construct an inner rectangle $R_{in}$ 
within this $R_{out}$ using Observation \ref{obs1}. Let $Top_{in}$, $Right_{in}$, $Bot_{in}$ and $Left_{in}$ denote the 
top, right, bottom and left edges of $R_{in}$ respectively. $W(A)$ is the width of annulus $A$.\\

\begin{lemma}\label{lem1}
Consider any annulus $K \in E(a,b)$ and assume that $W(K)$ is determined by top-, bottom or left-width of annulus $K$. If the 
right edges of outer and inner rectangles of annulus $K$ is shifted towards right to obtain another annulus $K'$, then
$W(K') \leq W(K)$.
\end{lemma}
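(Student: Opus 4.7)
The plan is to decompose $W(K)$ and $W(K')$ into their four constituent widths and exploit the fact that only the right side changes under the prescribed shift.

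First I would pin down the geometric setup. Both $K$ and $K'$ belong to $E(a,b)$, so $Top_{out}$ and $Bot_{out}$ of each annulus pass through $a$ and $b$ respectively. Since $K'$ is obtained from $K$ purely by shifting the right edges of its outer and inner rectangles rightward, the remaining six edges---namely $Top_{out}$, $Bot_{out}$, $Left_{out}$, $Top_{in}$, $Bot_{in}$, and $Left_{in}$---occupy identical positions in $K$ and in $K'$. This observation immediately yields
\[
\text{top-width}(K') = \text{top-width}(K),\quad \text{bottom-width}(K') = \text{bottom-width}(K),\quad \text{left-width}(K') = \text{left-width}(K).
\]

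Next I would invoke the hypothesis on $W(K)$. Since $W(K)$ is realized by the top-, bottom-, or left-width, we have
\[
W(K) \;=\; \min\{\text{top-width}(K),\ \text{bottom-width}(K),\ \text{left-width}(K)\}.
\]
Because $W(K')$ is the minimum of all four widths of $K'$ (including its possibly altered right-width), it is trivially bounded above by the minimum taken over just the first three widths of $K'$. Combining this with the equalities above gives
\[
W(K') \;\le\; \min\{\text{top-width}(K'),\ \text{bottom-width}(K'),\ \text{left-width}(K')\} \;=\; W(K),
\]
which is the desired inequality.

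The argument is essentially immediate once one records that the six non-right edges are invariant under the operation. I do not expect a serious obstacle. The only subtlety worth a sentence of commentary is feasibility: for $K'$ to be a valid empty annulus in $E(a,b)$, any point of $P$ newly captured in the strip between the old and new $Right_{out}$ must still fit inside the enlarged $R_{in}'$ whose top, bottom, and left edges are those of $R_{in}$. However, the lemma posits $K'$ as a resulting annulus, so this is a feasibility assumption rather than a claim to verify. If desired, one could strengthen the statement by noting that even if the inner rectangle's top or bottom edge were allowed to expand (as it would under Observation~\ref{obs1} for a tight inner rectangle), the corresponding widths could only \emph{decrease}, and the conclusion still holds.
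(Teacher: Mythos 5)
Your core decomposition is the right one, but the claim that the six non-right edges of $K$ and $K'$ ``occupy identical positions'' does not match how $K'$ is actually produced, and that is where the real content of the lemma lives. When $Right_{out}$ moves right, points of $P$ lying between the old and new right edges enter the outer rectangle; to keep the annulus empty the inner rectangle must be recomputed (Observation~\ref{obs1}), so $Top_{in}$ and $Bot_{in}$ of $K'$ may be pushed outward by these newly admitted points --- this is exactly what happens in the paper's Fig.~\ref{FigLemma1}(a), where $Bot_{in}$ of $K'$ passes through a new point $p_b'$ with $x(p_b')\geq x(t)$. Hence the top- and bottom-widths of $K'$ are in general only $\leq$ their counterparts in $K$, not equal, and your main chain of equalities is false under the reading that Algorithm~\ref{alg:algostrip} relies on; dismissing the discrepancy as a ``feasibility assumption'' proves a statement too weak to justify the pruning of annuli that the lemma is invoked for.

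That said, your closing remark contains the correct repair, and once it is promoted from an aside to the main argument the proof is complete and in substance matches the paper's: the left-widths of $K$ and $K'$ coincide (the new points lie to the right of the old $Right_{out}$, hence to the right of $p_l$, so $Left_{out}$ and $Left_{in}$ are untouched), the top- and bottom-widths can only shrink, and therefore
\[
W(K') \;\le\; \min\{\text{top-width}(K'),\ \text{bottom-width}(K'),\ \text{left-width}(K')\} \;\le\; \min\{\text{top-width}(K),\ \text{bottom-width}(K),\ \text{left-width}(K)\} \;=\; W(K),
\]
the last equality being the hypothesis. This monotonicity argument is arguably cleaner than the paper's proof, which proceeds by contradiction and treats the left-width case and the top-/bottom-width case separately; but you should state the inequalities (rather than equalities) for the top and bottom widths as the default, since they are all that the construction of $K'$ guarantees.
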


\begin{figure}[]
  \centering
  \includegraphics[scale=0.6]{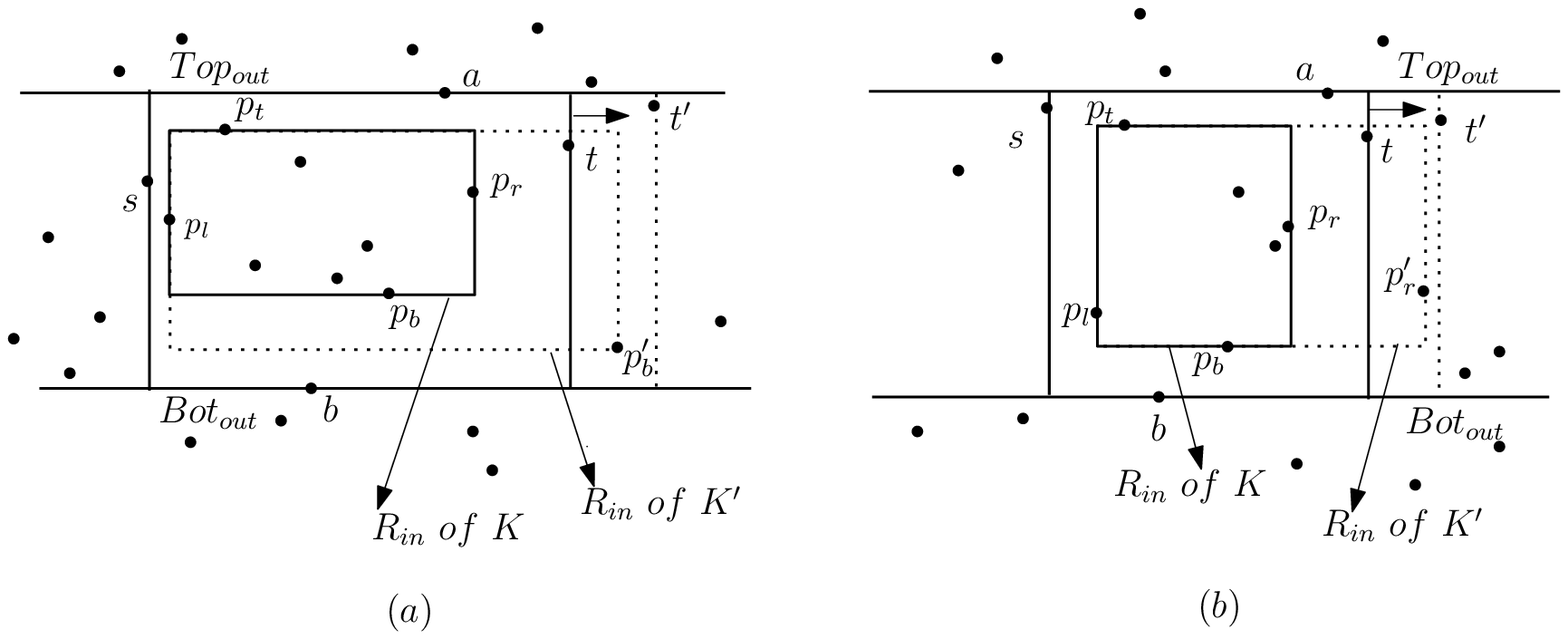}\vspace{-0.1in}
  \caption{\small{Two cases of construction of new annulus $K'$ from annulus $K$. 
  (a) Left-width of $K$ becomes $W(K)$. $K'$ is formed by shifting right edges of $R_{in}$ and $R_{out}$ of $K$. $Right_{in}$
  and $Bot_{in}$ edges of $K'$ pass through the corner point $p_b'$.
  Left width of $K'$ becomes $W(K')$ showing $W(K)= W(K')$. (b) Top width of $K$ becomes $W(K)$. $K'$ is generated from $K$ 
  and $Right_{in}$ passes through $p_r'$. Here right-width of $K'$ is $W(K')$ and shows $W(K')<W(K)$.}}\vspace{-0.1in}  
 \label{FigLemma1}
 \end{figure}
 
\begin{proof}
Let the top- ($Top_{in}$), right- ($Right_{in}$), bottom- ($Bot_{in}$), left- ($Left_{in}$) edges of inner 
rectangle ($R_{in}$) of annulus $K$ pass through the points $p_t$, $p_r$, $p_b$, and $p_l$ respectively. The $Left_{out}$ and 
$Right_{out}$ edges of $K$ pass through the points $s$ and $t$ where $s \in L$, $t \in R$ and $W(K)$ is determined by 
the left-width of $K$. See Fig.\ref{FigLemma1}(a) for an illustration. We now shift $Right_{out}$ of annulus $K$ from $t$ to 
a point $t'$ in the right where $x(t)<x(t')$ and $t' \in R$ keeping $Left_{out}$ fixed at $s$. Annulus $K'$ is constructed
where $K' \in E(a,b)$ and let us assume that $W(K')>W(K)$. Since the left edges of outer and inner rectangles of 
both $K$ and $K'$ pass through same points $s$ and $p_l$, their left-widths are equal.
If the $Bot_{in}$ edge of $K'$ is determined by a point say $p_b'$ 
such that $x(p_b') \geq x(t)$ and $p_b \in R$ then bottom-width of $K'$ is less than bottom-width of $K$. Similarly we can say 
this for top-width of $K'$. Right-width of $K'$ can be equal, greater or smaller than right-width of $K$. 
If any one of the top-, right- or bottom-widths of $K'$ is smaller than its left-width then $W(K')<W(K)$. So it contradicts
our assumption. If left-width of $K'$ determines $W(K')$ then we have $W(K')=W(K)$.\\

Now consider annulus $K$ where its top-width determines $W(K)$. Now we shift $Right_{out}$ of 
annulus $K$ from $t$ to any point $t'$ in the right where $x(t)<x(t')$ and $t, t' \in R$ and $Left_{out}$ fixed at 
$s$, $s \in L$ (See Fig.\ref{FigLemma1}(b)). 
Annulus $K'$ is formed. $Right_{in}$ of $K'$ pass through $p'_r$ where $p'_r \in R$. To achieve better solution 
i.e. $W(K')>W(K)$ we have to increase the top-width of $K'$. However this is not 
possible because the point $p_t$ will lie in the open region between $R_{in}$ and $R_{out}$ of annulus $K'$. This means
that no further sweeping of $Right_{out}$ and $Right_{in}$ of annulus $K$ is required. Using symmetry the assertion that
$W(K') \leq W(K)$ holds when $W(K)$ is determined by the bottom-width of $K$ and $K'$ is any annulus whose left edge of outer 
rectangle lies at the same position where $Left_{out}$ of $K$ lies, and right edge of outer rectangle lies to the
right of $Right_{out}$ of $K$.\qed
\end{proof}
Similarly we can prove the following result.\vspace{-0.1in}

\begin{lemma}\label{Lem2}
Assume that $K$ is an annulus in $E(a,b)$ where $W(K)$ is determined by top-, bottom- or right-width of annulus $K$. If the 
left edges of outer and inner rectangles of annulus $K$ is shifted towards left to obtain another annulus $K'$, then
$W(K') \leq W(K)$. 
\end{lemma}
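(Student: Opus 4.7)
The plan is to mirror the argument of Lemma~\ref{lem1} by reflecting the entire configuration across a vertical axis. Under this reflection, the sets $L$ and $R$ interchange, $Left_{out} \leftrightarrow Right_{out}$, $Left_{in} \leftrightarrow Right_{in}$, and the left- and right-widths swap, while top- and bottom-widths are preserved. The definition of $E(a,b)$ survives the reflection (the strip $S(a,b)$ is horizontal, and $a$, $b$ still sit on $Top_{out}$ and $Bot_{out}$), so a maximum-width analysis in the reflected picture is literally Lemma~\ref{lem1}. The current statement is then just the pull-back of that result.

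In detail, I would let $Top_{in}$, $Right_{in}$, $Bot_{in}$, $Left_{in}$ of $R_{in}$ of $K$ pass through $p_t, p_r, p_b, p_l$, and let $Left_{out}$, $Right_{out}$ pass through $s \in L$ and $t \in R$. Form $K'$ by shifting $Left_{out}$ from $s$ to some $s' \in L$ with $x(s') < x(s)$, keeping $Right_{out}$ at $t$; the edge $Left_{in}$ of $K'$ is then pinned by some point $p'_l$ to the left of $p_l$ (possibly $p_l$ itself). I would then split into three subcases matching the hypothesis.

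Subcase (right-width determines $W(K)$): since $Right_{out}$ and $Right_{in}$ of $K'$ coincide with those of $K$, the right-width of $K'$ equals the right-width of $K$, which equals $W(K)$; therefore $W(K') = \min(\text{four widths of }K') \leq W(K)$, with equality achievable (this is the analogue of Fig.~\ref{FigLemma1}(a)). Subcase (top-width determines $W(K)$): any attempt to make the top-width of $K'$ strictly larger than that of $K$ would require raising $Top_{in}$ above $p_t$, but $p_t$ lies in the strip and in the horizontal span $[x(s'), x(t)] \supseteq [x(s), x(t)]$ of $R_{out}$ of $K'$, so it would fall into the open annular region of $K'$, contradicting emptiness; hence the top-width of $K'$ does not exceed that of $K$, and $W(K') \leq W(K)$. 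Subcase (bottom-width determines $W(K)$): identical to the previous subcase with $p_b$ in place of $p_t$.

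The main (and essentially only) obstacle is to verify that the horizontal reflection is a bijection on $E(a,b)$, so that quoting Lemma~\ref{lem1} in the reflected world is legitimate; once this is observed, each subcase above is just a one-line translation of the corresponding subcase in the proof of Lemma~\ref{lem1}, and no new geometric argument is needed. I would therefore state the reflection argument once at the top and then tersely mirror the three subcases, rather than rewriting the case analysis in full.
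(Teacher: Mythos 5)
Your proposal is correct and takes essentially the same route as the paper, which proves Lemma~\ref{Lem2} only by remarking that it follows ``similarly'' (i.e., by left--right symmetry) from Lemma~\ref{lem1}; your vertical-reflection argument and the three subcases are exactly that symmetry made explicit. One minor wording slip: in the top-width subcase, making the top-width of $K'$ larger would require \emph{lowering} $Top_{in}$ \emph{below} $p_t$ (not raising it above $p_t$); the justification you give---that $p_t$ would then lie in the open annular region of $K'$, contradicting emptiness---is precisely the argument for that corrected direction.
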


\begin{figure}[h]
 \centering
 \includegraphics[scale=0.6]{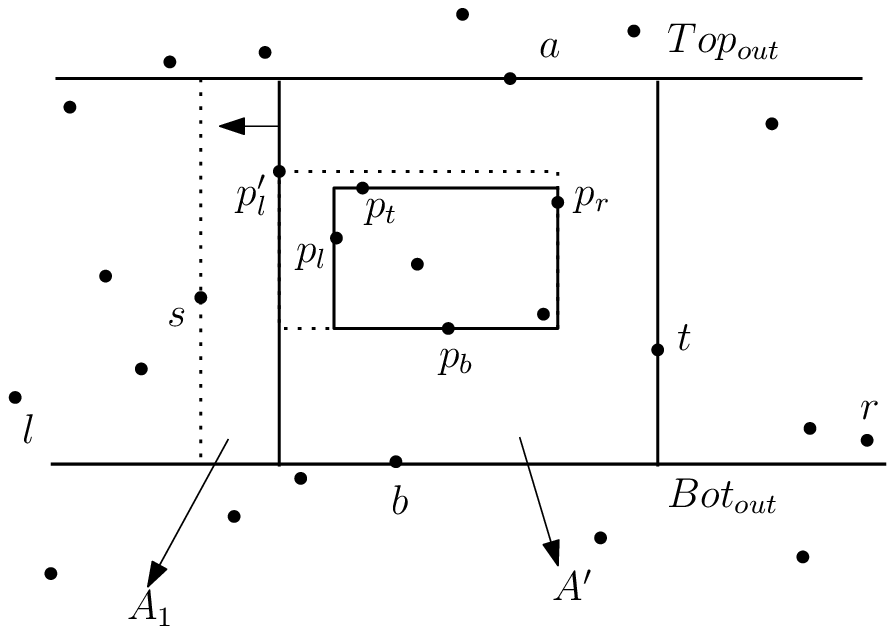}\vspace{-0.1in}
 \caption{\small{Demonstration shows construction of $A_1$ from $A'$. Left-width of $A'$ becomes $W(A')$. $Top_{in}$, 
 $Right_{in}$, 
$Bot_{in}$ and $Left_{in}$ edges of $A'$ pass through the points $p_t$, $p_r$, $p_b$, $p_l$ respectively. $Right_{out}$ of
both $A'$ and $A_1$ pass through $t$. $Left_{out}$ of $A'$ and $A_1$ pass through $p'_l$ and $s$.}}\vspace{-0.1in}
 \label{fig:shift}
 \end{figure} 

Algorithm \ref{alg:algostrip} is based on the computation of a new annulus from an
initial configuration. Let $A'$ be a given annulus in $E(a,b)$. See Fig.\ref{fig:shift} as illustration. Depending on $W(A')$
we {\sf shift}
$Left_{out}$ of $A'$ from $p'_l$ to the next event point in left $s$, where $p'_l, s \in L$. Therefore a new outer rectangle 
is formed. Since $p'_l$ lies in the open region between this new outer rectangle and $R_{in}$ of $A'$, $p'_l$ is compared
with the points $p_t$, $p_b$ and $p_l$. We thus create new annulus $A_1$. Note that this operation requires constant
time. Now we describe Algorithm \ref{alg:algostrip} to compute the set $E(a,b)$ for Case I. In each step our algorithm keeps 
information about the best solution computed so far. Let the best solution in
$S(a,b)$ is stored in $W(A^{max}_{(a,b)})$ where $A^{max}_{(a,b)}$ is an maximum-width annulus in the strip.
% 
% 
% and $Left_{in}$ edges 
% of $A'$ to the next event point in left or $Right_{out}$ and $Right_{in}$ to the next event point in right from their current 
% positions. See Fig.\ref{fig:shift}. We shift $Left_{out}$ edge of $A'$ from $p'_l$ to its immediate left point $s$
% where $p'_l, s \in L$. Since $p'_l$ lies in the open region between $R_{in}$ and $R_{out}$ of $A'$, so $p'_l$ is compared
% with the points $p_t$, $p_b$ and $p_l$. Thus $R_{in}$ of $A_1$ is constructed. Note that this operation requires constant
% time. Now we describe our algorithm to compute the set $E(a,b)$ 
% for Case I. In each step our algorithm keeps information about the best solution computed so far. Let the best solution in
% $S(a,b)$ is stored in $W(A^{max}_{(a,b)})$ where $A^{max}_{(a,b)}$ is an maximum-width annulus in the strip.

\begin{algorithm}[]
\KwIn{Annulus $A$ whose outer rectangle is defined by two opposite corner points $a$ and $b$ and its $Left_{out}$ and 
$Right_{out}$ passes through $b$ and $a$. $L, M, R$ are set of ordered points in $S(a,b)$ in increasing order (w.r.t.
the value of x-coordinates) where $L, M, R$ are obtained from $Q$ in $S(a,b)$.}
\KwOut{The width $W(A^{max}_{(a,b)})$ of an empty annulus $A^{max}_{(a,b)}$ of maximum-width in $S(a,b)$.}
$W(A^{max}_{(a,b)}) \gets W(A)$. \\
\While{$Left_{out}$ {\em and} $Right_{out}$  {\em do not pass through} $l$  {\em and} $r$  {\em respectively}} {
    \If { {\em top-width (or bottom-width) of $A$} {\em determines} $W(A)$}    
        {\If {$W(A) > W(A^{max}_{(a,b)})$}
        {$W(A^{max}_{(a,b)}) \gets W(A)$}
       {\bf Exit} }

    \If { {\em left-width of} $A$  {\em determines} $W(A)$  \& $Left_{out}$  {\em passes through} $l$} 
        {\If {$W(A) > W(A^{max}_{(a,b)})$}
        {$W(A^{max}_{(a,b)}) \gets W(A)$}
              {\bf Exit} }

    \If {{\em right-width of} $A$  {\em determines} $W(A)$ \& $Right_{out}$  {\em passes through} $r$} 
        {\If{$W(A) > W(A^{max}_{(a,b)})$}
        {$W(A^{max}_{(a,b)}) \gets W(A)$}
            {\bf Exit} }

    \If {{\em left-width of} $A$  {\em determines} $W(A)$  {\em then shift} $Left_{out}$  {\em of} $A$  {\em to the next event point in left. Let} $A'$  {\em is
    the new annulus formed}}
        {\If{$W(A') > W(A^{max}_{(a,b)})$}
          {$W(A^{max}_{(a,b)}) \gets W(A')$}
          {$A \gets A'$ (Ref. Lemma \ref{lem1})}}
          
    \If {{\em right-width of} $A$  {\em determines} $W(A)$ {\em then
     shift} $Right_{out}$  {\em of} $A$  {\em to the next event point in right. Let} $A'$ {\em be the new annulus formed}}
        {\If{$W(A') > W(A^{max}_{(a,b)})$}
         {$W(A^{max}_{(a,b)}) \gets W(A')$}
         {$A \gets A'$ (Ref. Lemma \ref{Lem2})}}

    }
  Return $W(A^{max}_{(a,b)})$. 
\caption{Algorithm for computing an empty annulus of maximum-width in $S(a,b)$.}
\label{alg:algostrip}
\end{algorithm}

Note that if top- (or bottom) width becomes width of an annulus and is equal to its left- (or right) width we consider
its top- (or bottom) width as its width (Followed from Lemma \ref{lem1}). Also if any annulus have width from its left-width 
and right-width simultaneously then we consider any one of them as its width and proceed accordingly.

\begin{figure}
 \centering
 \includegraphics[scale=0.6]{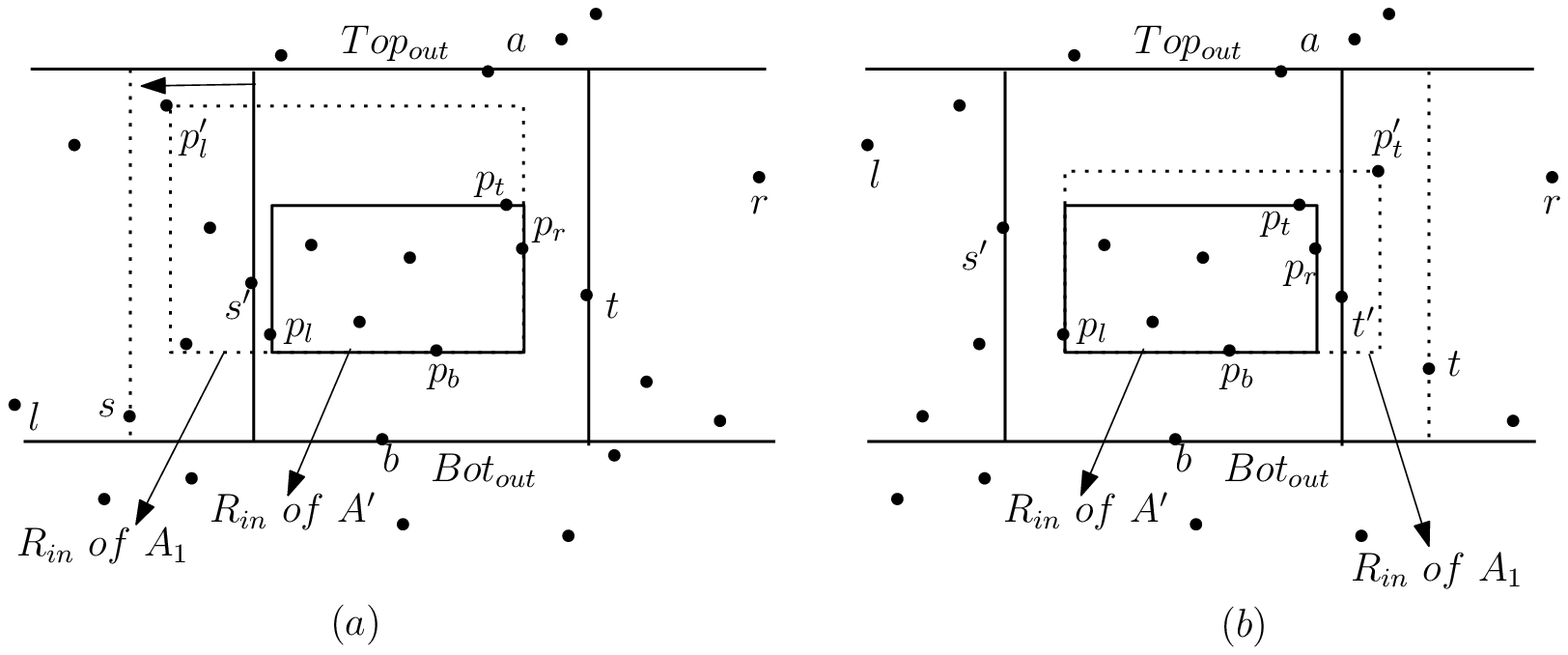}\vspace{-0.1in}
 \caption{\small{(a)$W(A')$ and $W(A_1)$ are determined by left-widths of $A'$ and $A_1$. (b)$W(A')$
 and $W(A_1)$ are determined by right-widths of $A'$ and $A_1$.}} \vspace{-0.1in} 
 \label{lem2_fig}
 \end{figure}
 
In Algorithm \ref{alg:algostrip} all elements of $E(a,b)$ are not computed. It starts with the initial configuration of 
annulus $A$. Depending on $W(A)$ we shift either its left edge or right edge of outer rectangle. Assume that 
left-width of $A$ determines $W(A)$. Now consider $A'$ and $A_1$ are two annuli computed in Case I where left-widths of 
$A'$ and $A_1$ determines $W(A')$ and $W(A_1)$ respectively. See Fig.\ref{lem2_fig}(a) as an illustration.
$Left_{out}$ of $A'$ and $A_1$ pass through $s'$ and $s$, and $s', s \in L$. $Right_{out}$ of $A'$ and $A_1$ 
pass through point $t$, $t \in R$. 
Let $A_L$ be the set of all annuli whose left edges ($Left_{out}$) of outer rectangle pass through any point between
$s'$ and $s$. The right edge of outer rectangle of any annulus in $A_L$ is fixed at $t$.  
If we shift the $Right_{out}$ of $A'$ to any point $t'$ such that $x(t') > x(t)$ and $t' \in R$ then the annulus
that will be created have width either less or equal to $W(A')$ (Ref. Lemma \ref{lem1}). This fact is true for all
annuli in $A_L$. This means that there is no requirement to generate all those annuli whose $Left_{out}$ pass through any point
from $s'$ to $s$ and $Right_{out}$ passes through any point in the right of $t$. It may happen that $Left_{out}$ of $A_1$ 
reaches $l$ and left-width of $A_1$ determines $W(A_1)$ then our algorithm terminates and reports the best solution in $S(a,b)$.
If right-width of $A_1$ determines $W(A_1)$ then we shift the $Right_{out}$ of $A_1$ and compute annuli further. Now assume 
that $W(A')$ and $W(A_1)$ are determined by right-widths of $A'$ and $A_1$. $Right_{out}$ of $A'$ and $A_1$ pass through 
$t'$ and $t$ and $t', t \in R$ and their $Left_{out}$ is fixed at $s'$, $s' \in L$. Let $A_R$ be the set of all annuli whose
right edges of outer rectangle pass through any point between $t'$ and $t$ and left edges of outer rectangle fixed at $s'$ (See 
Fig.\ref{lem2_fig}(b)). In a similar way we can say that there is no requirement to shift the left edge of outer rectangle of
any annulus in $A_R$.\\

We now consider the case when $|M| = 0$.

\noindent {\bfseries Case II($|M| = 0$):} We use Algorithm \ref{alg:algostrip} to generate annuli of $E(a,b)$. It requires an 
initial configuration. We need at least two points to create an inner rectangle. These two points to form inner rectangle can 
lie in the left side of both $a$ and $b$, in the right side of both $a$ and $b$, or one in the right side of $a$ and other in
the left side of $b$. Thus we need three initial configurations of the annuli from which 
we can generate other annuli in $E(a,b)$. They are as follows.\\

$(i)$ Outer rectangle formed by point $a$ on the top right corner, $b$ in the bottom and $Left_{out}$ passing through the 
point $q_1$ where $q_1$ is immediate left point of $q'$ (See Fig. \ref{case2}(a)). Two points $q$ and $q'$ lie on the 
two opposite corner of the inner rectangle. We name this annulus as $A_1$.\\

$(ii)$ Outer rectangle formed by point $b$ on the lower left corner, $a$ on the above and $Right_{out}$ passing through the 
point $p_1$ where $p_1$ is immediate right point of $p'$ (See Fig. \ref{case2}(b)). Two points $p$ and $p'$ lie on the 
two opposite corner of the inner rectangle. We name this annulus as $A_2$.\\

$(iii)$ Outer rectangle formed by point $q'$ on the left, $a$ on the above, $p'$ on the right and $b$ lying at bottom. The 
inner rectangle is formed by two opposite corner points $p$ and $q$. Say this annulus $A_3$ (See Fig. \ref{case2}(c)). 
% 
% Since there are no points inside the outer rectangle formed by corner
% points $a$ and $b$ we construct three inner rectangles ($R_{in}$) within outer rectangles defined by\\
% $1)$ Points $a$ on the top right corner, $b$ in the bottom and $Left_{out}$ passing through the point $q_1$ where
% $q_1$ is immediate left point of $q'$ (See Fig. \ref{case2}(a)). We need two points from $L$ ($q$ and $q'$ here) to form inner 
% rectangle. Let this annulus be $A_1$ and $W(A_1)$ be its width.\\
% $2)$ Points $b$ on the lower left corner, $a$ on the above and $Right_{out}$ passing through the point $p_1$ where
% $p_1$ is immediate right point of $p'$. Here we need two points from $R$ ($p$ and $p'$) to form inner 
% rectangle. Let this annulus be $A_2$ and $W(A_2)$ be its width (See Fig. \ref{case2}(b)).\\
% $3)$ Points $q'$ on the left, $a$ on the above, $p'$ on the right and $b$ lying at bottom. The inner rectangle is formed
% by $p$ and $q$. Let this annulus be $A_3$ and $W(A_3)$ be its width (See Fig. \ref{case2}(c)).\\

 \begin{figure}[h]
 \centering
 \includegraphics[scale=0.6]{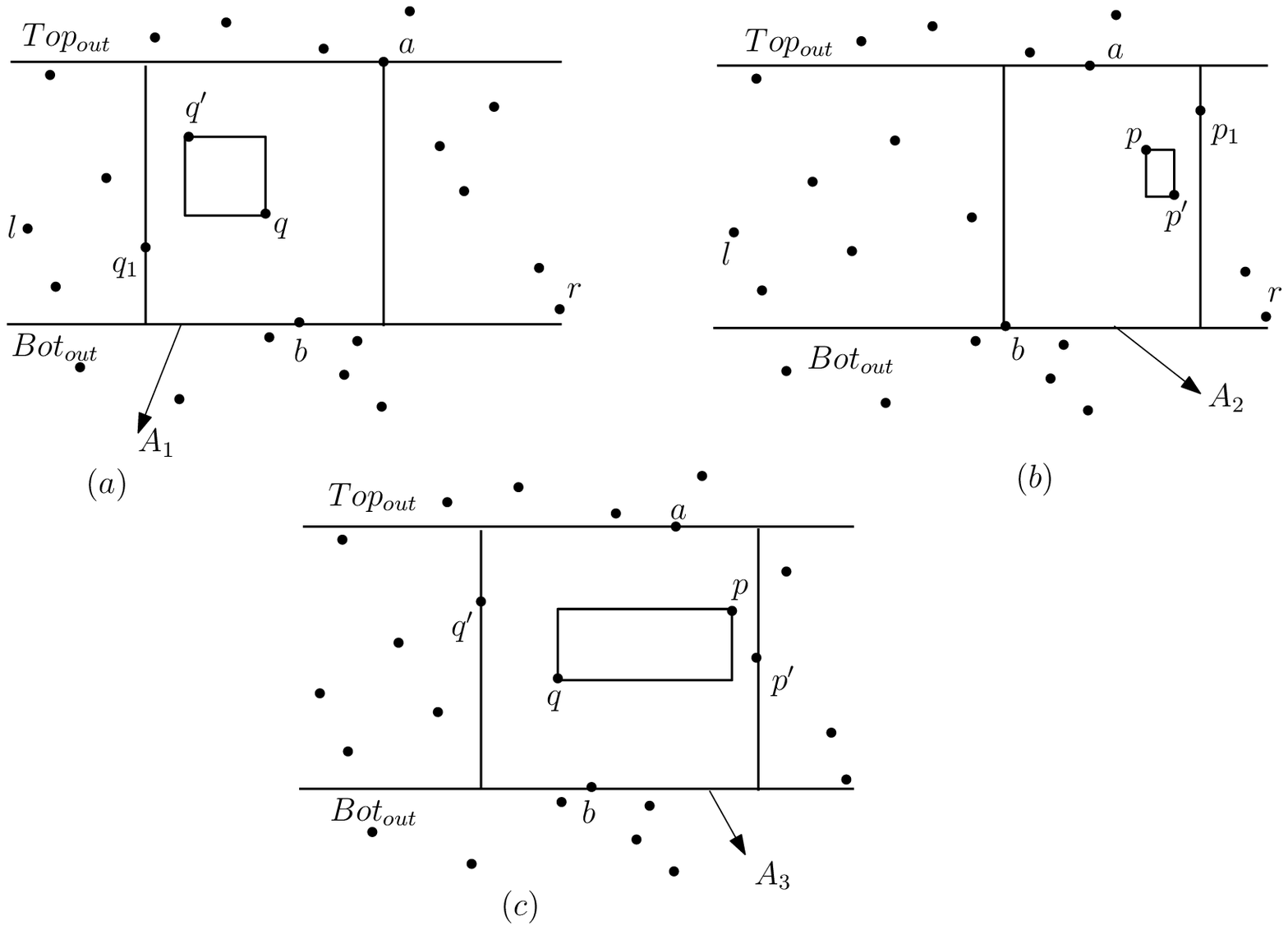}
 \caption{Demonstration shows $A_1, A_2, A_3$.}  
 \label{case2}
 \end{figure}

%We required $A_1, A_2, A_3$ to compute the candidates of $E(a,b)$ because any one of them cannot generate the other two annuli.
% Since $|M| \doteq 0$, we need $A_1, A_2, A_3$ to compute the candidates of $E(a,b)$. This is because if we only use $A_1$
% for generating other elements we include $R_{in}$ of $A_1$ in all other annuli. But this is not required
% for any annulus in $E(a,b)$. So we construct $A_2$. From $A_1$ and $A_2$ it is not possible to create $A_3$. So $A_3$ is
% formed separately. We apply Algorithm \ref{alg:algostrip} of Case I separately on $A_1, A_2, A_3$. Before applying the 
% algorithm we must specify the following initializations.\\
% 
% When we start with $A_1$, $R_{out}$ of $A_1$ is considered (See Fig. \ref{case2}). 
% Here $Left_{out}$ and $Right_{out}$ are initially
% at $q_1$ and $a$ respectively. $Left_{out}$ starts sweeping from $q_1$ and moves towards $l$. These are event points for 
% $Left_{out}$. Similarly $Right_{out}$ moves from $a$ up to $r$. Like before in each step our algorithm keeps
% information about the best solution computed so far.\\
%  
% Similarly following initializations are required when we start with $A_2$ and $A_3$.
% If we start with $A_2$, starting positions of $Left_{out}$ and $Right_{out}$ are at points $b$ and $p_1$. $Right_{out}$
% starts sweeping from $p_1$ and moves towards $r$ whereas $Left_{out}$ starts from $b$ and shift gradually towards $l$.
% And for $A_3$ starting positions of $Left_{out}$ and $Right_{out}$ are at points $q'$ and $p'$ respectively.\\
For each initial configuration we invoke Algorithm \ref{alg:algostrip}. We compare the solutions obtained from them
and finally report an empty annulus of maximum-width in $S(a,b)$.\\
% 
% After applying Algorithm \ref{alg:algostrip} on $A_1$, $A_2$ and $A_3$ we compare the solutions obtained from them and output
% an empty annulus of maximum-width in $S(a,b)$.\\

\noindent {\bfseries Case III($|M| = 1$):} A single point, say $z$ is present inside the outer rectangle formed by two opposite 
corner points $a$ and $b$ in $S(a,b)$. Algorithm \ref{alg:algostrip} requires an initial configuration to start with. We need 
at least two points to create an inner rectangle. One of them is $z$ and the other point can lie either in the left side or 
right side of $z$. Therefore we form two initial configurations of annuli to compute other annuli in $E(a,b)$.\\

$(i)$ Outer rectangle formed by point $a$ on the top right corner, $b$ in the bottom and $Left_{out}$ passing through the 
point $q'$ (See Fig. \ref{case3}(a)). Opposite corner points $q$ and $z$ form inner rectangle. This annulus is $A_1$.\\

$(ii)$ Outer rectangle formed by point $b$ on the lower left corner, $a$ on the above and $Right_{out}$ passing through the
point $p'$. Here $p$ and $z$ are used to form inner rectangle. See Figure \ref{case3}(b). Let this annulus be $A_2$.\\
% 
% Therefore we create two inner rectangles within the outer rectangle defined by points \\
% $1)$ Points $a$ on the top right corner, $b$ in the bottom and $Left_{out}$ passing through the point $q'$. See 
% Fig. \ref{case3}(a). We need two points $q$ and $z$ to form inner rectangle. Let this annulus be $A_1$ and 
% $W(A_1)$ be its width.\\
% $2)$ Points $b$ on the lower left corner, $a$ on the above and $Right_{out}$ passing through the point $p'$.
% Here two points $p$ and $z$ are used to form inner rectangle. See Figure \ref{case3}(b). Let this annulus be $A_1$ and 
% $W(A_1)$ be its width. \\
% Like Case II $A_1, A_2$ are required to compute the elements of $E(a,b)$. The following initializations are required here.\\

\begin{figure}[h]
 \centering
 \includegraphics[scale=0.55]{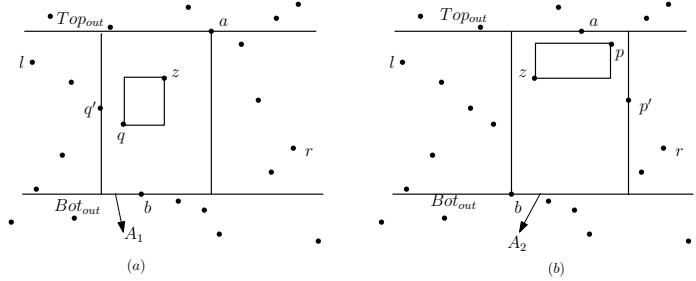}
 \caption{Demonstration shows $A_1, A_2$.}  
 \label{case3}
 \end{figure}
% 
% If we start with $A_1$ starting positions of $Left_{out}$ and $Right_{out}$ are at points $q'$ and $a$. $Left_{out}$ starts 
% sweeping from $q'$ and moves up to $l$ and $Right_{out}$ shifts from $a$ up to $r$. Similarly for $A_2$, $Left_{out}$ and 
% $Right_{out}$ are initially at points $b$ and $p'$. 
We invoke Algorithm \ref{alg:algostrip} separately on $A_1$, $A_2$ and report an empty annulus of maximum-width in $S(a,b)$.
As stated in Case I, we do not compute all elements of $E(a,b)$ for Case II and Case III. We report an empty annulus of 
maximum-width in $S(a,b)$ from those annuli which are computed in Case II (resp. Case III). \\
% The best solution in $S(a,b)$ is reported after comparing the solutions obtained from them.\\
% 
% The correctness of Case II and Case III follows from previous lemmas.\\
% 
% Lemma \ref{lem1} and Lemma \ref{Lem2} lead to the following result.\\
Now we have the following result.\\
\begin{theorem}\label{timecom}
For a given set of points $P = \{p_1, \ldots, p_n\}$ in $\mathbb R^{2}$, an empty
annulus of maximum-width can be computed in $O(n^3)$ time using $O(n)$ space.
\end{theorem}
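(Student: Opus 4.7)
The plan is to assemble Theorem~\ref{timecom} by combining the per-strip routine developed above with an outer enumeration over pairs of horizontal edges of the outer rectangle. First I would observe that every empty annulus has a unique top edge and a unique bottom edge of its outer rectangle, each passing through some point of $P$; thus the optimum lies in $E(a,b)$ for some ordered pair $(a,b) \in P\times P$ with $y(a)>y(b)$. This gives a natural outer loop over $\binom{n}{2} = O(n^2)$ strips $S(a,b)$, inside each of which Algorithm~\ref{alg:algostrip} (or its two/three-seed variants for Cases~II and~III) returns $W(A^{max}_{(a,b)})$; the global maximum is reported at the end.

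For the preprocessing I would sort $P$ once by $x$- and once by $y$-coordinate in $O(n\log n)$ time and store the two orderings in $O(n)$ space. Given these orderings, for any pair $(a,b)$ the ordered sequence $Q$ of points in $S(a,b)$ (and thus the partition into $L$, $M$, $R$) can be extracted in $O(n)$ time by one linear scan of the $x$-sorted list filtering by the $y$-interval $[y(b),y(a)]$. The initial annulus for the relevant case is then built in $O(|Q|)$ time by Observation~\ref{obs1}.

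Next I would argue that one invocation of Algorithm~\ref{alg:algostrip} on a strip runs in $O(n)$ time. The sweep fires events only at points of $L\cup R$, of which there are at most $n$, and terminates as soon as the shift it would perform is blocked by $l$ or $r$ or by a top/bottom-width witness. Each event shifts $Left_{out}$ or $Right_{out}$ by one position and must update the four extreme points defining $R_{in}$; this can be made $O(1)$ per event by maintaining, as auxiliary data, the current leftmost, rightmost, topmost and bottommost points inside the rolling outer rectangle and updating them incrementally when a single point enters the band between $Left_{out}$ and $Right_{out}$. Cases~II and~III contribute only a constant number of independent runs per strip, so the per-strip cost remains $O(n)$. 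Summing over $O(n^2)$ strips gives $O(n^3)$ time; the working memory at any moment is the sorted arrays, the current annulus description and the running best, all $O(n)$.

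The step I expect to be the main obstacle is not the bookkeeping but the correctness claim: that Algorithm~\ref{alg:algostrip}, which deliberately does \emph{not} enumerate all of $E(a,b)$, still outputs the maximum-width element of $E(a,b)$. For this I would lean entirely on Lemmas~\ref{lem1} and~\ref{Lem2}. Concretely, I would fix the true optimum $A^* \in E(a,b)$ and show by induction on the sweep steps that the current annulus $A$ maintained by the algorithm satisfies $W(A^{max}_{(a,b)}) \ge W(A^*)$ at termination: whenever the width of the current $A$ is realized on its left (resp.\ right) side, Lemma~\ref{Lem2} (resp.\ Lemma~\ref{lem1}) certifies that the pruned family of candidates, namely those obtained by moving the other vertical edge while keeping this side fixed, cannot strictly beat $A$, so shifting the limiting edge loses nothing; and when the width is realized on the top or bottom, the \textbf{Exit} branches correctly report $A$ as a local optimum beyond which no improvement is possible within the remaining reachable configurations. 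Combining this correctness within each strip with the fact that the global optimum belongs to $E(a,b)$ for some pair $(a,b)$ yields the stated $O(n^3)$ time, $O(n)$ space bound.
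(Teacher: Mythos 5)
Your proposal is correct and follows essentially the same route as the paper: enumerate the $O(n^2)$ strips $S(a,b)$, run the per-strip sweep (Algorithm~\ref{alg:algostrip} with the case-dependent initial configurations, $O(1)$ work per event point so $O(n)$ per strip), and justify the pruning via Lemmas~\ref{lem1} and~\ref{Lem2}, giving $O(n^3)$ time and $O(n)$ space. You merely spell out the preprocessing and bookkeeping details that the paper distributes over Section~\ref{proposedalgorithm} rather than inside its two-line proof of the theorem.
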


\begin{proof}
Note that the number of horizontal strips formed by any two points of $P$ is $O(n^2)$.
Algorithm \ref{alg:algostrip} requires $O(m)$ time where $m$ ($\leq n$) is the number of input points in any such strip. 
Thus the result follows. \qed
\end{proof}

In the above axis-parallel empty rectangular annulus problem, the inner rectangle forming such an empty annulus 
always have {\sf non-zero} area. However if $R_{in}$ reduces to a single point $p \in P$ then we have following result.

\begin{corollary}\label{coro111}
Given a set $P$ of $n$ points on $\mathbb R^{2}$, an axis-parallel empty rectangular annulus $A$ of maximum-width can be
computed in $O(n\log n)$ time using $O(n)$ space when the inner rectangle of the annulus $A$ reduces to a single point $p \in P$.
\end{corollary}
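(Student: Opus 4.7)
The plan is to reduce the problem to an all-pairs Chebyshev nearest-neighbour computation. When the inner rectangle degenerates to a single input point $p\in P$, the four widths of the annulus are just the perpendicular distances from $p$ to the four sides of the outer rectangle $R_{out}$; so for each candidate inner point $p$ I would look for the largest $w$ such that some axis-parallel rectangle contains $p$ at distance $\geq w$ from each of its sides and no other point of $P$ in its interior, and then maximise over $p\in P$.

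First I would show that, for fixed $p$, this optimal $w$ equals the $L_\infty$ nearest-neighbour distance $w^{*}(p) := \min_{q\in P\setminus\{p\}} d_\infty(p,q)$. In one direction, any valid $R_{out}$ of width $\geq w$ contains the closed axis-parallel square $S_w(p)$ of half-side $w$ centred at $p$, so emptiness of the interior of $R_{out}$ forces emptiness of the interior of $S_w(p)$; this forces $w \leq d_\infty(p,q)$ for every other input point $q$. Conversely, $S_{w^{*}(p)}(p)$ itself is a valid outer rectangle of width exactly $w^{*}(p)$, since no other input point lies in its open interior and any realising $q$ lies on its boundary (which the problem statement explicitly allows). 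Hence the per-point optimum is exactly $w^{*}(p)$.

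With this reduction in hand, what remains is to compute $\max_{p\in P} w^{*}(p)$. I would build the $L_\infty$ Voronoi diagram of $P$ in $O(n\log n)$ time and $O(n)$ space and, for each site, read off its Chebyshev nearest neighbour from the $O(n)$ adjacencies of its dual; equivalently, a $45^{\circ}$ rotation turns the Chebyshev metric into the $L_1$ metric and allows any textbook $O(n\log n)$ all-nearest-neighbours routine to be invoked as a black box. The optimal annulus is then reconstructed from the maximiser $p^{*}$ and its Chebyshev nearest neighbour $q^{*}$: its outer rectangle is the square of half-side $d_\infty(p^{*},q^{*})$ centred at $p^{*}$, and the degenerate inner rectangle is the point $p^{*}$ itself.

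The main obstacle is really just making the reduction watertight around boundary behaviour. A careless statement would replace ``interior'' by ``closure'' somewhere and turn the supremum into an unattained limit; what saves the argument is precisely the rule that input points are permitted on the boundary of $R_{out}$, which is exactly what makes $w^{*}(p)$ achievable by $S_{w^{*}(p)}(p)$. Once this is cleanly stated, the $O(n\log n)$ time and $O(n)$ space bound is immediate from classical results on $L_\infty$ Voronoi diagrams, and no further case analysis of the annulus geometry is needed.
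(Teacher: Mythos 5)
Your proposal is correct and follows the same algorithmic skeleton as the paper: reduce the degenerate case to an all-nearest-neighbours computation and invoke an $O(n\log n)$-time, $O(n)$-space Voronoi-diagram construction. The difference lies in what the two write-ups actually establish. The paper's proof is two lines: it builds ``the Voronoi diagram'' (citing the standard Euclidean construction), computes each input point's nearest neighbour in $O(n\log n)$ total time, and declares that the result follows; it never states the reduction, i.e.\ it never says how a nearest-neighbour distance yields the annulus width, nor in which metric ``nearest'' is meant. Your argument supplies exactly that missing content: for a fixed inner point $p$ the optimal width equals the Chebyshev nearest-neighbour distance $w^{*}(p)=\min_{q\in P\setminus\{p\}} d_\infty(p,q)$, proved in both directions, with the boundary convention handled correctly (the square $S_{w^{*}(p)}(p)$ attains the supremum precisely because realising points may lie on the boundary of $R_{out}$). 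This is not a cosmetic improvement, because the metric matters: for $p=(0,0)$, $q_1=(3,3)$, $q_2=(0,4)$, the Euclidean nearest neighbour of $p$ is $q_2$, yet the maximum width of an annulus with degenerate inner rectangle $p$ is $d_\infty(p,q_1)=3$, so a literal reading of the paper's proof (Euclidean Voronoi diagram, Euclidean nearest neighbours) computes the wrong primitive. One needs the $L_\infty$ diagram or, as you note, a $45^{\circ}$ rotation to $L_1$ followed by a standard all-nearest-neighbours routine, either of which keeps the $O(n\log n)$ time and $O(n)$ space bounds. In short, your route proves the corollary in full, while the paper's version only sketches the complexity bookkeeping and leaves the crucial metric choice implicit (and, as cited, incorrect).
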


\begin{proof} One can construct the voronoi diagram for point set $P$ in $O(n \log n)$ time using $O(n)$ size data 
structure~\cite{bcko-cgaa-08}. For any query point $q \in P$, a nearest point of $q$ among the points from $P$ can be
computed in $O(\log n)$ time.
Therefore the computation of nearest input points for all points of $P$ requires $O(n \log n)$ time.
Hence the result follows.  \qed
\end{proof}

\section{Conclusion and discussion}\label{conclusion}
In the annulus problem studied by Mukherjee et al.~\cite{jmkd-mwra-2012}, the outer rectangle of an annulus of minimum-width
which encloses $P$ must be the minimum enclosing rectangle enclosing $P$ where $P$ is the set of $n$ input points 
in $\mathbb R^{2}$.
However in the empty axis-parallel rectangular annulus problem of maximum-width, the number of potential outer rectangles 
is $O(n^4)$.
This observation implies that an $O(n^5)$ algorithm is trivial to find an empty rectangular axis-parallel annulus of maximum 
width. Therefore the proposed $O(n^3)$ time algorithm to solve the maximum-width empty annulus problem is a non-trivial one. 
Note that we didn't give any lower bound for this problem but proposed 
$O(n\log n)$ time algorithm in Corollary~\ref{coro111} to solve the problem for a particular case. In this context, it would 
be interesting to give a sub-quadratic algorithm or to prove the problem $O(n^2)$-hard.
% This problem may be further studied when the width of the annulus is defined as follows.
% The maximum-width among the top-, right-, bottom- and left-widths of a rectangular annulus $A$ is defined as the {\sf width} 
% and the objective is 
% to find an empty axis-parallel rectangular annulus of {\sf minimum} (resp. maximize) width for the input points set $P$. 
Note that for each empty
rectangular annulus problem discussed so far, the orientation is fixed. In future it remains as a challenge to solve this 
problem where the annuli are of {\sf arbitrary orientation}.
% We therefore like to end this section
% by leaving another open problem for computing an minimum width empty rectangular annulus of {\sf arbitrary orientation}.  

\section{Acknowledgements}
This work is supported by Project (Ref. No. $248 (19)~2014~$R $\&$ D II$~\/ 1045$) from The National Board for Higher 
Mathematics (NBHM), Government of India awarded to P. Mahapatra where Arpita Baral is a research scholar under this Project.

\bibliographystyle{plain}
\bibliography{ptdom-April-2015}
\end{document}